\newtheorem{theorem}{Theorem}
\theoremstyle{definition}
\numberwithin{equation}{section}
\numberwithin{theorem}{section}
\newenvironment{OMabstract}{\noindent\textbf{Abstract.} }{\medskip}
\newenvironment{OMsubjclass}{\noindent\textbf{Mathematics Subject Classification (2020):} }{\medskip}
\newenvironment{OMkeywords}{\noindent\textbf{Keywords:}  }{\medskip}
\begin{document}

\author{Vyacheslav Pivovarchik} 
\title{ Recovering the shape of a quantum tree by two spectra}
\maketitle


\begin{OMabstract}
We show  how to find the shape of an equilateral tree using  the spectra of the Neumann and the Dirichlet problems generated by the Sturm-Liouville equation.   
In case of snowflake trees the spectra of the Neumann and Dirichlet problems uniquely determine the shape of the tree.      
\end{OMabstract}

\begin{OMkeywords}
      Sturm-Liouville equation, eigenvalue, equilateral tree, star graph, Dirichlet boundary condition, Neumann boundary condition.
\end{OMkeywords}

\begin{OMsubjclass}
    34B45, 34B240, 34L20
\end{OMsubjclass}


\section{Introduction}  
The problem of recovering the shape of a combinatorial graph using the eigenvalues of its adjacensy matrix is  described in \cite{CDZ} where several examples of cospectral graphs are shown.

In quantum graph theory the problem of recovering the shape of a graph was stated  in \cite{vB} and \cite{GS}. It was shown in \cite{GS} that if the lengths of the edges are non-commeasurate then the spectrum of the spectral Sturm-Liouville problem on a graph with standard conditions at its vertices uniquely determines the shape of this graph. In \cite{vB}, it was shown that in case of commensurate lengths of the edges there exist cospectral quantum graphs. 
In \cite{KN} it was shown that the spectrum of the Neumann problem with zero potential  on $P_2$ uniquely determines the shape of the graph.  
In \cite{CP} it was shown that if the graph is simple connected equilateral with the number of vertices less or equal 5 and the potentials on the edges are real $L_2$ functions then the spectrum of the Sturm-Liouville problem with standard conditions at the vertices uniquely determines the shape of the graph. For trees the minimal number of vertices in a cospectral pair is 9 (see \cite{Pist} and \cite{CP1}).  If the number of vertices doesn't exceed 8 then to find the shape of a tree we need just to find in \cite{CP} the characteristic polynomial corresponding to the given spectrum. 
 
In present paper we show how to find the shape of a tree using the spectra of the Dirichlet and the Neumann problems. This method works even in case of large number of vertices. If the solution is not unique we can find all the solutions. 

In Section 2 we describe the Neumann spectral problem, i.e. the Sturm-Liouville problem with standard conditions (continuity + Kirchhoff at the interior vertices and Neumann at the pendant vertices). Also we describe the Dirichlet problem where we impose the Dirichlet condition at the root (an arbitrary chosen vertex) keeping standard conditions at all the other vertices. 
Also  we expose known results  which we use in the sequel.

In Section 3 we prove a theorem where the fraction of the characteristic polynomial of the normalized Laplacian  of the corresponding  combinatorial tree and the modified characteristic polynomial of  its certain subgraph (a tree or a forest, obtained by deleting the root and the incident edges)  is presented as a branched continuous fraction. In case of a snowflake tree this presentation is unique.

In Section 4 using the result of Section 3 we show the procedure of recovering the shape of a tree using asymptotics of the spectra of the Neumann and Dirichlet problems  and the degree of the root.

\section{Statement of the problem and auxiliary results}

Let $T$ be an equilateral  tree with $p$ vertices and $q=p-1$ edges each of the length $l$. We choose an arbitrary  vertex $v_0$ as the root and direct all  the edges  away from the root.  
Let us describe the {\it Neumann} spectral  problem on this tree. We consider  the Sturm-Liouville equations on the edges 
\begin{equation}
\label{2.1}
-y_j^{\prime\prime}+q_j(x)y_j=\lambda y_j, \ \  j=1,2,..., g 
\end{equation} 
where $q_j\in L_2(0,l)$ are real.

For each edge $e_j$ incident with a pendant vertex which is not the root 
we impose the Neumann condition 
\begin{equation}
\label{2.2}
y_j'(l)=0. 
\end{equation}
At  each interior vertex which is not the root we impose the continuity conditions   
\begin{equation}
\label{2.3}
y_j(l)=y_k(0)
\end{equation}
for the incoming into $v_i$ edge $e_j$ and for all $e_k$ outgoing from $v_i$ and the Kirchhoff's conditions
\begin{equation}
\label{2.4}
y'_j(l)=\mathop{\sum}\limits_k y_k'(0)
\end{equation}
where the sum is taken over all edges $e_k$ outgoing from $v_i$. 

If the root is an interior vertex then the conditions at $v_0$ are
\begin{equation}
\label{2.5}
y_i(0)=y_j(0)
\end{equation}
for all indices $i$ and $j$ of the edges incident with the root and
\begin{equation}
\label{2.6}
\mathop{\sum}\limits_i y_i'(0)=0.
\end{equation}
If the root is pendant and  its incident edge is $e_1$ then
\begin{equation}
\label{2.7}
y_1'(0)=0.
\end{equation}
The above conditions (continuity +Kirchhoff''s or Neumann) we call standard.


{\bf Standing assumption}
For all edges the potentials $q_j$ are real-valued functions of the space $L^2(0,\ell)$.

In the sequel, if the potentials are the same on all the  edges we omit the index in $q_j$ and $y_j$. The following theorem adopted for trees can be found as Theorem 5.2 in \cite{CP} but it originates from \cite{Bel85}.


{\bf Theorem  2.1}  
{\it
 Let $T$ be a tree with $p\geq 2$.  Assume that all edges have the same length $l$ and the same potentials symmetric with respect to the midpoints of the edges ($q(l-x)=q(x)$).  Then the spectrum of problem (\ref{2.1})--(\ref{2.6})  or (\ref{2.1})--(\ref{2.4}), (\ref{2.7}) coincides with the set of zeros of the function  
\begin{equation}
\label{2.8} 
\phi_N(\lambda)=s(\sqrt{\lambda}, l) \tilde{\psi}(c (\sqrt{\lambda}, l))
\end{equation} 
where $\tilde{\psi}(z)=(1-z^2)^{-1}\psi(z)$,
\[\psi(z)=det(-zD+A).\] 
Here $A$ is the adjacency matrix of $T$, 
\[D=diag(d(v_0), d(v_1),...,d(v_{p-1}),\]  
$d(v_i)$ is the degree of the vertex $v_i$, $s(\sqrt{\lambda},x)$ and $c(\sqrt{\lambda},x)$ are the solutions of the Sturm-Liouville equation on the edges satisfying the  conditions $s(\sqrt{\lambda},0)=s'(\sqrt{\lambda},0)-1=0$ and $c(\sqrt{\lambda},0)-1=c'(\sqrt{\lambda},0)=0$.}

Now we consider the Dirichlet problem. We impose the Dirichlet condition at $v_0$: 
\begin{equation}
\label{2.9}
y_i(0)=0
\end{equation}
for all edges incident with $v_0$, and consider the Dirichlet problem which consists of equations (\ref{2.1})--(\ref{2.4}) and (\ref{2.9}).   

Then 
 we can consider  $T$ as a union of $d(v_0)$  subtrees   $T_1$, $T_2$, ..., $T _{d(v_0)}$ which have common vertex $v_0$ and spectral problems on them meaning that  the Dirichlet conditions are imposed at $v_0$ while at the rest of  vertices we keep the standard  conditions. Thus, we have $d(v_0)$ problems on the subtrees.  
 
 Denote by $\hat{T}_i$ the tree obtained by removing the pendant vertex with the Dirichlet boundary conditions (the root) and the edge  incident with it in $T_i$. 
 Let $\hat{A}_i$  be the adjacency matrix of ${\hat T}_i$, let
$\hat{D}_{T,i}=diag \{d(v_{i,1}), d(v_{i,2}), ..., d(v_{i,p_i-1})\}$, where 
$d(v_{i,j})$ is the degree of the vertex $v_{i,j}$ in $T_i$ (we underline that in $T_i$, not in $\hat{T}_i$ !) 
and $p_i$ is the number of vertices $\{v_0, v_{i,1}, ..., v_{i,p_i-1}\}$ in $T_i$.

We consider the polynomials  defined by 
\begin{equation}
\label{2.10}
\hat{\psi}_{i}(z):=det(z\hat{D}_{T,i}-\hat{A_i}).
\end{equation}
Theorem 6.4.2 of  \cite{MP2}  adapted to the case a tree with the Dirichlet condition at one of the vertices is as follows

{\bf Theorem 2.2} {\it Let $T_i$ be a tree with at least two edges rooted at a pendant vertex $v_0$. Let the Dirichlet condition be imposed at the root and the standard conditions at all other vertices. Assume that all edges have the same length $l$ and the same potentials symmetric with respect to the midpoints of the edges ($q(l-x)=q(x)$). Then the spectrum of problem (\ref{2.1})--(\ref{2.4}), (\ref{2.9})  coincides with the set of zeros of  the characteristic function
\begin{equation}
\label{2.11}
\phi_{D,i}(\lambda)=\hat{\psi}_i(c(\sqrt{\lambda},l)).
\end{equation}
 }


It is clear that 
\begin{equation}
\label{2.12}
\phi_D(\lambda)=\prod_{i=1}^{d(v_0)}\phi_{D,i}(\lambda)=\prod_{i=1}^{d(v_0)}det(c(\sqrt{\lambda},l)\hat{D}_{T,i}-\hat{A_i})
\end{equation} 
is the characteristic function of the Dirichlet problem  (\ref{2.1})--(\ref{2.4}), (\ref{2.9}) on the initial tree $T$.

\vspace{3mm}

Denote by  
\[
\hat{\psi}(z):=\prod_{i=1}^{d(v_0)}\hat{\psi}_{i}(z). 
\]

It is clear that
\[
\hat{\psi}(z)=det(-z\hat{D}+\hat{A})
\]
where   $\hat{A}$  be the adjacency matrix of the  forest  $\hat{T}$ (set of subtrees $\hat{T}_1$, $\hat{T}_2$, ..., $\hat{T} _{d(v_0)}$ obtained by deleting $v_0$ and the incident edges from $T$ (we call {\it principal subforest (subtree if $d(v_0)=1$})),  
$\hat{D}_{T}=diag \{d(v_{1}), d(v_{2}), ..., d(v_{p-1})\}$, where 
$d(v_{j})$ is the degree of the vertex $v_{j}$ in $T$ and $p$ is the number of vertices $\{v_0, v_{1}, ..., v_{p-1}\}$ in $T$ (see Fig. 1).


\section{Main results}

First of all we notice that $-z\hat{D}+\hat{A}$ is the principal submatrix of matrix $-zD+A$ obtained by deleting the row and column corresponding to $v_0$.

\newpage

\begin{figure}
 \begin{center}
   \includegraphics[scale= 0.7 ] {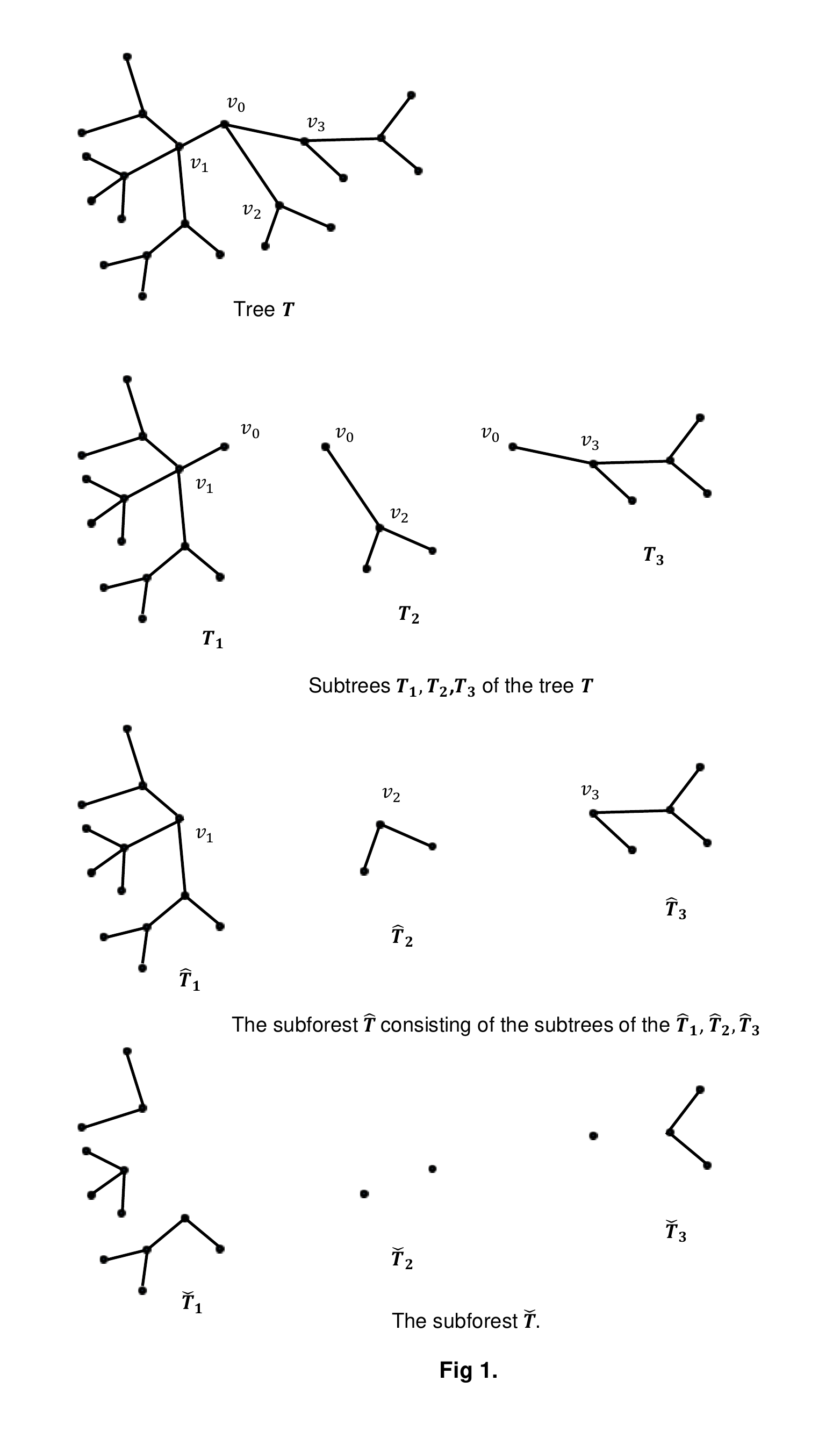}
  \end{center}
\end{figure}


\begin{theorem}
Let $T$ be an equilateral tree. Then the fraction $\frac{\psi(z)}{\hat{\psi}(z)}$ can be  expanded in branched continuous fraction. The coefficients before $+z$ or $-z$ correspond to the degrees of  the vertices. The beginning fragment 
\[
-m_0z+
\sum_{k=1}^{m_0} \frac{1}{
m_kz-...}
\]
of the expansion means that the vertex $v_0$ is connected by edges with  $m_0$ vertices, say $v_1$, $v_2$, ..., $v_{m_0}$.

A fragment
\[
...\pm\sum_{i=1}^{r}\frac{1}{-m_iz+\sum_{k=1}^{m_i-1}\frac{1}{+m_{i,k}z+...}}
\]
means that there are $r$ vertices each have one incoming edge and $m_i-1$ ($i=1,...,r$)
 outgoing edges.

A fragment 
\[
... \pm\frac{m}{z}
\]
at an end of a branch of the continuous fraction means $m$ edges ending with  pendant vertices.
\end{theorem}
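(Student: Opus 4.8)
\medskip
\noindent\textbf{Proof idea.}
The statement is a purely linear-algebraic identity between the polynomials $\psi$ and $\hat\psi$ (the equilateral hypothesis and the $L^2$ potentials are irrelevant here; they enter only later, through Theorems 2.1 and 2.2, when one passes to spectra), and the plan is to prove it by unfolding a recursion along the rooted tree, the single tool being the Schur complement of a bordered determinant. Put $M:=-zD+A$. By the remark preceding the theorem, $M$ has the block form
\[
M=\begin{pmatrix}-z\,d(v_0) & b^{T}\\ b & M_{00}\end{pmatrix},
\]
where $M_{00}=-z\hat D+\hat A$ is obtained by deleting the row and column of $v_0$ and $b$ is the $0$--$1$ column marking the $m_0=d(v_0)$ neighbours of $v_0$. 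Since $\psi=\det M$ and $\hat\psi=\det M_{00}$, the Schur determinant identity gives, as rational functions of $z$,
\[
\frac{\psi(z)}{\hat\psi(z)}=-z\,d(v_0)-b^{T}M_{00}^{-1}b .
\]

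First I would exploit acyclicity. For any vertex $v$ of $T$ let $T_v$ be the subtree rooted at $v$ (that is, $v$ together with all its descendants); since $T_v$ is an induced subtree, the principal submatrix $M_v$ of $M$ on $V(T_v)$ equals $-zD'+A_{T_v}$ with $D'$ the restriction of the $T$-degrees, so that in $M_v$ the vertex $v$ is bordered exactly by its children and carries the diagonal entry $-z\,d(v)$ (the degree of $v$ in the whole tree $T$). Let $\hat M_v$ be $M_v$ with the row and column of $v$ removed and set $R_v(z):=\det M_v/\det\hat M_v$; note $M_{v_0}=M$, $\hat M_{v_0}=M_{00}$, hence $R_{v_0}(z)=\psi(z)/\hat\psi(z)$. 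Removing $v$ from $T_v$ disconnects it into the subtrees $T_w$ over the children $w$ of $v$, so $\hat M_v$ is block diagonal with diagonal blocks $M_w$; therefore $\hat M_v^{-1}$ is block diagonal, the quadratic form decouples, and with $b_v$ marking the children of $v$ one gets $b_v^{T}\hat M_v^{-1}b_v=\sum_{w}(M_w^{-1})_{ww}=\sum_{w}\det\hat M_w/\det M_w=\sum_{w}1/R_w(z)$ by Cramer's rule. The Schur identity applied to $M_v$ in place of $M$ now yields the recursion
\[
R_v(z)=-z\,d(v)-\sum_{w}\frac{1}{R_w(z)},
\]
the sum over the children $w$ of $v$, with base case $R_v(z)=-z$ at every pendant vertex (there $\hat M_v$ is empty, $\det\hat M_v=1$, and $d(v)=1$).

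Next I would simply unfold this recursion from the root down to the leaves. The resulting nested fraction is \emph{branched} precisely because $R_v$ is a sum of several reciprocals, one per child, and its branching tree is, by construction, a copy of the rooted tree $T$. The outermost level $R_{v_0}=-m_0z-\sum_{i=1}^{m_0}1/R_{v_i}$, which after one reciprocation reads $-m_0z+\sum_{k=1}^{m_0}1/(m_kz-\cdots)$, exhibits the $m_0$ edges at $v_0$ and the degrees of the vertices adjacent to it; a vertex with one incoming and $m_i-1$ outgoing edges has degree $m_i$ in $T$, so by the recursion it contributes the factor $1/R_{v_i}$ with $R_{v_i}=-m_iz-\sum_{k=1}^{m_i-1}1/R_{v_{i,k}}$, whose $m_i-1$ inner branches run over its children, which after the reciprocations is exactly the stated middle fragment; and a vertex with $m$ pendant children contributes, through the base case $R=-z$, a sum of $m$ copies of $1/(-z)$, i.e.\ a terminal fragment $\pm m/z$. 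The alternation of the leading signs ``$+z$'' and ``$-z$'' with the depth is automatic: each descent reciprocates a subexpression of the shape $-d(w)z-\cdots$ and so flips the sign of its leading term, whence the coefficient displayed at a vertex of depth $j$ is $(-1)^{j+1}d(v)$; this is why the theorem records the coefficients, and the three fragments, only up to sign. Reading the degrees and the branching off the finished expression then recovers $T$, and the expansion is finite because the recursion terminates at the pendant vertices.

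The only genuine friction I foresee is the sign bookkeeping --- arranging the induction so that the signs in the three fragments come out exactly as typeset, i.e.\ tracking the depth parity through the successive reciprocations --- together with being explicit at the one point where the tree hypothesis is indispensable, namely the block-diagonal splitting of $\hat M_v$: in a graph containing a cycle the off-diagonal blocks of $\hat M_v^{-1}$ need not vanish, the quadratic form would not decouple, and the fraction would no longer branch along $T$. Everything else is the routine Schur-complement induction sketched above.
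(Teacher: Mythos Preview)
Your proposal is correct and follows essentially the same route as the paper: both peel off the row and column of the current root to obtain the recursion $R_v=-d(v)z-\sum_{w}1/R_w$ over the children, and then iterate down the tree. The paper phrases the peeling step as a Laplace (first-row) expansion of $\det(-zD+A)$, whereas you use the Schur complement together with block-diagonality of $\hat M_v$ and Cramer's rule; these are equivalent formulations of the same bordered-determinant identity, and your write-up is in fact more explicit about where the tree hypothesis enters.
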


\begin{proof} 

 Consider the matrix $-zD+A$ meaning that the first row corresponds to $v_0$ and the next $d(v_0)$  rows to the vertices adjacent with $v_0$. The first row expansion of the determinant  of this matrix $-zD+A$ gives
\begin{equation}
\label{3.2}
det(-zD+A)=-d(v_0)z det(-z\hat{D}+\hat{A}) - \sum_{k=1}^{d(v_0)}det(-z\check{D}_k+\check{A}_k)
\end{equation}
where the principal submatrix $(-z\hat{D}+\hat{A})$ is obtained from $(-zD+A)$ by deleting the first row and the first column while $(-z\check{D}_k+\check{A}_k)$ is the obtained from $(-z\hat{D}+\hat{A})$ by deleting its $k$th row and its $k$th column. The corresponding subtrees can be seen at Fig. 1.

Deviding both parts of (\ref{3.2}) by $det(-z\hat{D}+\hat{A})$ we continue expanding into branched fraction and  obtain 

\begin{equation}
\label{**}
\frac{\psi(z)}{\hat{\psi}(z)}=\frac{det(-zD+A)}{det(-z\hat{D}+\hat{A})}=
-d(v_0)z-\frac{\sum_{i=1}^{d(v_0)}det(-z\check{D}_i+\check{A}_i)}{det(-z\hat{D}+\hat{A})}=
\end{equation}
\[
-d(v_0)z+
\sum_{i=1}^{d(v_0)} \frac{1}{
d(v_i)z-
\sum_{i=1}^{d(v_i) - 1}
\frac{
\hat{\psi}_{i}(z)
}  
{ \check{\psi}_{i}(z)}
}
\]
Here $\check{\psi}_{i}(z)$ is the modified characteristic polynomial of the subtree $\check{T}_i$. To finish the proof we need to continue this procedure.
\end{proof}

{\bf Example} Let $\psi(z)=-108z^6+258z^4-203z^2+52$ and $\hat{\psi}(z)=36z^5-58z^3+23z$. Then 
\[
\frac{\psi(z)}{\hat{\psi}(z)}=-3z+\frac{84z^4-133z^2+12}{36z^5-58z^3+23z}=-3z+\frac{1}{z}+\frac{1}{z}+\frac{12z^4-17z^2+6}{36z^5-58z^3+23z}=
\]
\[
-3z+\frac{1}{z}+\frac{1}{z}+\frac{1}{3z-\frac{7z^3-5z}{12z^4-17z^2+6}}=
-3z+\frac{1}{z}+\frac{1}{z}+\frac{1}{3z-\frac{z}{-3z^2+2}-\frac{z}{-4z^2+3}}=
\]
\begin{equation}
\label{3.3}
-3z+\frac{1}{z}+\frac{1}{z}+\frac{1}{3z-\frac{1}{-3z+\frac{1}{z}+\frac{1}{z}}-\frac{1}{-4z+\frac{1}{z}+\frac{1}{z}+\frac{1}{z}}}.
\end{equation}
Thus, this branched continued fraction corresponds to the tree of Fig. 2. 

\begin{figure}
 \begin{center}
   \includegraphics[scale= 0.9 ] {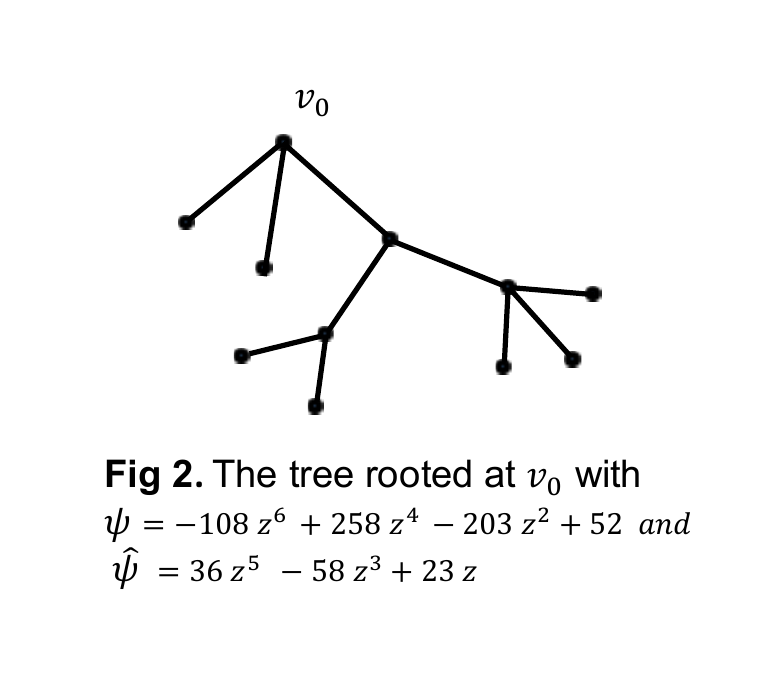}
  \end{center}
\end{figure}

This tree   is unique corresponding to $\psi(z)=-108z^6+258z^4-203z^2+52$ and $\hat{\psi}(z)=36z^5-58z^3+23$. To prove it we notice that 
\[
\lim\limits_{z\to \infty}=-\frac{1}{z}\frac{\psi(z)}{\hat{\psi}(z)}=3
\]
what means that the degree of the root is 3. Now 
\[
\lim\limits_{z\to \infty}z\left(\frac{\psi(z)}{\hat{\psi}(z)}+3z\right)=\frac{7}{3}.
\]
Therefore,
\[
\frac{7}{3}=\frac{1}{d_1}+\frac{1}{d_2}+\frac{1}{d_3}
\]
 where $d_1$, $d_2$ and $d_3$ are the degrees of the vertices adjacent with the root. This equation has the only (up to permutations) solution in natural numbers $\{1,1, 3\}$. That means that the expansion
\[\frac{\psi(z)}{\hat{\psi}(z)}=
-3z+\frac{1}{z}+\frac{1}{z}+\frac{1}{3z-\frac{7z^3-5z}{12z^4-17z^2+6}}
\]
is unique. Now we need to  expand $\frac{7z^3-5z}{12z^4-17z^2+6}$ in the form of two summands, Since the equation $\frac{1}{d_3}+\frac{1}{d_4}=\frac{7}{12}$ has the only (up to permutation) solution $\{3, 4\}$. The only way to expand the fraction  $\frac{7z^3-5z}{12z^4-17z^2+6}$ is shown in (\ref{3.3}).

\vspace{3mm}
By snowflake graph we mean a tree with the distance between the root and any pendant vertex $\leq 2$ (see an example at Fig. 3). 

\begin{figure}
 \begin{center}
   \includegraphics[scale= 0.7 ] {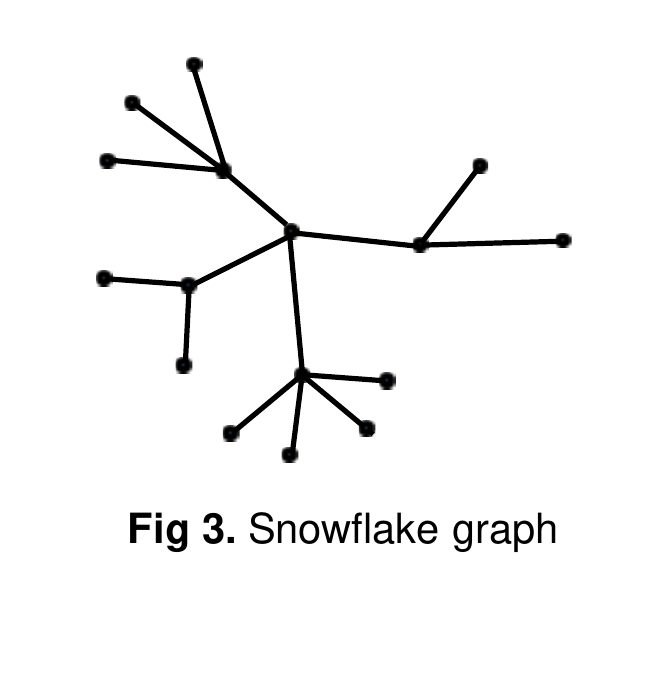}
  \end{center}
\end{figure}

{\bf Theorem 3.2} Let $T$ be a snowflake graph rooted at the central vertex. The corresponding  two functions  $\psi(z)$ and $\hat{\psi}(z)$ uniquely determine the shape of the graph. 

\begin{proof}
 In case of snowflake graph equation (\ref{**}) looks as follows
\begin{equation}
\label{***}
\frac{\psi(z)}{\hat{\psi}(z)}=
-d(v_0)z+
\sum_{k=1}^{d(v_0)} \frac{1}{
d(v_k)z-\frac{d(v_k)-1}{z}}.
\end{equation}
Thus, we can find the degree of the central vertex: 
\begin{equation}
\label{3.5}
\lim\limits_{z\to \infty}\left(-\frac{1}{z}\frac{\psi(z)}{\hat{\psi}(z)}\right)=d(v_0)
\end{equation}
After it using equation
\begin{equation}
\label{3.6}
\frac{\psi(z)}{\hat{\psi}(z)}+d(v_0)z=
z\frac{\sum_{k=1}^{d(v_0)}\prod_{s=1, s\not=k}^{d(v_0)}
(d(v_s)z^2-d(v_s)+1)}
{\prod_{k=1}^{d(v_0)}
(d(v_k)z^2-d(v_k)+1)}.
\end{equation}
we can find the denominator of the right-hand side of (\ref{3.6}), i.e. the polynomial 
$\prod_{k=1}^{d(v_0)}
(d(v_k)z^2-d(v_k)+1)$. Expantion of this polynomial as a  product is unique.
\end{proof}

\section{ Recovering the shape of a quantum graph tree by two spectra}

Now we are ready to recover the shape of a tree.  Using the asymptotics of the spectrum of the Neumann problem we can find  the function $\psi(\lambda)$ (up to a constant factor). Let us show it.  

The eigenvalues of problem (\ref{2.1})--(\ref{2.6}) or (\ref{2.1})--(\ref{2.4}), (\ref{2.7}) can be presented as the union of subsequences $\{\lambda_k\}_{k=1}^{\infty}=\mathop{\cup}\limits_{i=1}^{2g-1}\{\lambda_k^{(i)}\}_{k=1}^{\infty}$
with the following asymptotics
\begin{equation}
\label{4.1}
\sqrt{\lambda_k^{(i)}}\mathop{=}\limits_{k\to\infty}\frac{2\pi (k-1)}{l}+\frac{1}{l}\arccos \alpha_{i}+O\left(\frac{1}{k}\right) \ \ {\rm for}  \ \  i=2,3,..., p-1, \ \ k\in \mathbb{N}
\end{equation} 
\[
\sqrt{\lambda_k^{(i)}}\mathop{=}\limits_{k\to\infty}\frac{2\pi k}{l}-\frac{1}{l}\arccos \alpha_{i}+O\left(\frac{1}{k}\right) \ \ {\rm for}  \ \  i=p, p+1,..., 2p-3, \ \ k\in \mathbb{N}
\]
\begin{equation}
\label{4.2} 
\sqrt{\lambda_k^{(1)}}\mathop{=}\limits_{k\to\infty}\frac{\pi (k-1)}{l}+O\left(\frac{1}{k}\right) \ \  k\in\mathbb{N}.
\end{equation}
where $\alpha_1=1\leq \alpha_2\leq \alpha_3\leq ... \leq \alpha_{p-1}\leq \alpha_p=1$ are the zeros of $\psi(z)$.

Now by Theorem 2.2 and equation (\ref{2.12}) we see that the spectrum \\
 $\{\nu_k\}_{k=1}^{\infty}=\mathop{\cup}\limits_{i=1}^{d(v_0)}\mathop{\cup}\limits_{j=1}^{p_i-1}\{\nu_{k,j}^{(i)}\}_{k=1}^{\infty}$ 
of the Dirichlet problem (\ref{2.1})--(\ref{2.4}), (\ref{2.9}) on $T$ is the union of the spectra of the Dirichlet problems on the subtrees $T_i$ ($i=1,...,d(v_0)$). According to Theorem 2.2, the spectrum of the Dirichlet problem on $T_i$  consists of the subsequences 
\begin{equation}
\label{4.3}
\sqrt{\nu_{k,j}^{(i)}}\mathop{=}\limits_{k\to\infty}\frac{2\pi (k-1)}{l}+\frac{1}{l}\arccos \beta_{i,j}+O\left(\frac{1}{k}\right) \ \ {\rm for} 
\end{equation}
 \[ 
 j=1,2,..., p_i-1,  \ \ k\in \mathbb{N}
\]
\begin{equation}
\label{4.4}
\sqrt{\nu_{k,j}^{(i)}}\mathop{=}\limits_{k\to\infty}\frac{2\pi k}{l}-\frac{1}{l}\arccos \beta_{i,j}+O\left(\frac{1}{k}\right) \ \ {\rm for} 
\end{equation}
 \[ 
 j=p_i, p_i+1, ..., 2p_i-3, \ \ k\in \mathbb{N}
\]
where $\beta_{i,j}$ are the zeros of $\hat{\psi}_{i}(z)$.

{\bf Theorem 4.1} {\it Let $\{\lambda_k\}_{k=1}^{\infty}$ be the spectrum of the Neumann problem (\ref{2.1})--(\ref{2.6}) or (\ref{2.1})--(\ref{2.4}), (\ref{2.7}) and $\{\nu_k\}_{k=1}^{\infty}=\mathop{\cup}\limits_{i=1}^{d(v_0)}\mathop{\cup}\limits_{j=1}^{p_i-1}\{\nu_{k,j}^{(i)}\}_{k=1}^{\infty}$ 
be the spectrum of the Dirichlet problem (\ref{2.1})--(\ref{2.4}), (\ref{2.9}) where the Dirichlet condition is imposed at a vertex $v_0$ of degree $d(v_0)$.
Let $\{\alpha_k\}_{k=1}^p$ be the constants in (\ref{4.1}) and $\{\beta_{i,j}\}_{i=1, j=1}^{d(v_0),p_i}$ the constants in (\ref{4.3}).

Then 
\begin{equation}
\label{4.4}
\frac{\psi(z)}{\hat{\psi}(z)}=d(v_0)\frac{\prod_{i=1}^p(-z+\alpha_i)}{\prod_{i=1}^{d(v_0)}\prod_{j=1}^{p_i-1}(-z+\beta_{i,j})}.
\end{equation}
}

\begin{proof}  By Theorem 2.1 we know that  $\{\alpha_k\}_{k=1}^p$ is the set of zeros of $\psi(z)$ and by Theorem 2.2  that $\{\beta_{i,j}\}_{i=1, j=1}^{d(v_0),p_i}$ is the set of zeros of $\hat{\psi}(z)$. Thus,  we conclude that
\[
\frac{\psi(z)}{\hat{\psi}(z)}=C\frac{\prod_{i=1}^p(-z+\alpha_i)}{\prod_{i=1}^{d(v_0)}\prod_{j=1}^{p_i-1}(-z+\beta_{i,j})},
\]
where $C$ is a nonzero constant. By (\ref{3.5}) we obtain $C=d(v_0)$. 
\end{proof}

Suppose the degree of the root $d(v_0)$ is given. Then we can find  $\{\alpha_k\}_{k=1}^p$ and $\{\beta_{i,j}\}_{i=1, j=1}^{d(v_0),p_i}$ from asymptotics (\ref{4.1}) and (\ref{4.3}) and then expanding (\ref{4.4}) into branched continued fraction find the shape of the tree. We do not state that this tree is always unique.



\end{document}